\title{A note on mass-minimising extensions}
\date{\today}
\author{Stephen\hspace{-1.5mm}\newlength{\Mheight}
\newlength{\cwidth}
\settoheight{\Mheight}{M}\settowidth{\cwidth}{c}M\parbox[b][\Mheight][t]{\cwidth}{c}\hspace{0.15mm}Cormick\footnote{stephen.mccormick@une.edu.au}\vspace{3mm}\\School of Science and Technology\\University of New England\\Armidale, 2351\\Australia}
\newtheorem{theorem}{Theorem}[section]
\newtheorem{proposition}[theorem]{Proposition}
\newtheorem{conjecture}[theorem]{Conjecture}
\newtheorem{lemma}[theorem]{Lemma}
\newtheorem{remark}[theorem]{Remark}
\newcommand{\onabla}{\mathring\nabla}
\newcommand{\tnabla}{\tilde{\nabla}}
\newcommand{\tg}{\tilde{g}}
\newcommand{\dn}{{\delta_0}}
\newcommand{\og}{\mathring g}
\newcommand{\R}{\mathbb{R}}
\newcommand{\opi}{\mathring{\pi}}
\newcommand{\W}{\overline{W}}
\newcommand{\bH}{\overline{H}}
\numberwithin{equation}{section}
\begin{document}

\title{A note on mass-minimising extensions
}

\author{Stephen\hspace{-1.5mm}\newlength{\Mheight}
\newlength{\cwidth}
\settoheight{\Mheight}{M}\settowidth{\cwidth}{c}M\parbox[b][\Mheight][t]{\cwidth}{c}\hspace{0.15mm}Cormick\footnote{stephen.mccormick@une.edu.au}\vspace{3mm}\\School of Science and Technology\\University of New England\\Armidale, 2351\\\vspace{3mm}Australia\\ \today}

\date{\small Accepted for publication in General Relativity and Gravitation \\ The final publication is available at Springer via \href{http://dx.doi.org/10.1007/s10714-015-1993-2}{http://dx.doi.org/10.1007/s10714-015-1993-2}}

\setstretch{1.15}

\maketitle

\begin{abstract}
A conjecture related to the Bartnik quasilocal mass, is that the infimum of the ADM energy, over an appropriate space of extensions to a compact 3-manifold with boundary, is realised by a static metric. It was shown by Corvino [Comm. Math. Phys. 214(1), (2000)] that if the infimum is indeed achieved, then it is achieved by a static metric; however, the more difficult question of whether or not the infimum is achieved, is still an open problem. Bartnik [Comm. Anal. Geom. 13(5), (2005)] then proved that critical points of the ADM mass, over the space of solutions to the Einstein constraints on an asymptotically flat manifold without boundary, correspond to stationary solutions. In that article, he stated that it should be possible to use a similar construction to provide a more natural proof of Corvino's result.

In the first part of this note, we discuss the required modifications to Bartnik's argument to adapt it to include a boundary. Assuming that certain results concerning a Hilbert manifold structure for the space of solutions carry over to the case considered here, we then demonstrate how Bartnik's proof can be modified to consider the simpler case of scalar-flat extensions and obtain Corvino's result.

In the second part of this note, we consider a space of extensions in a fixed conformal class. Sufficient conditions are given to ensure that the infimum is realised within this class.

\end{abstract}

\section{Introduction}
The Bartnik mass is often cited as the gold standard definition of a quasilocal mass\footnote{We borrow the phrase ``gold standard" from a quote by Hubert Bray in a Duke University press release.}, if only it were possible to compute for a generic domain. The mass of a domain $\Omega$ in some initial data 3-manifold is as taken to be the infimum of the ADM mass over a space of admissible extensions to $\Omega$, satisfying the Einstein constraints. In Ref. \cite{qlm}, where Bartnik first defined the quasilocal mass now bearing his name, it is conjectured that this infimum is realised by a static extension to $\Omega$.

In 2000, Corvino proved part of this conjecture (Theorem 8 of Ref. \cite{corvino2000}); he proved that if a minimal ADM energy extension exists then it must be static. Note that we differentiate between the energy and the mass -- the latter being the absolute value of the energy-momentum four-vector, while the former refers to the component that is orthogonal to the Cauchy surface. It was then shown by Miao \cite{pengzi2003} that this static extension must also satisfy Bartnik's geometric boundary conditions; that is, the metric and boundary mean curvature agree on either side of $\partial\Omega$. Later, Bartnik suggested that a variational proof of Corvino's result, based on extending his work on the phase space \cite{phasespace} to manifolds with boundary, would be more natural.

In the first part of this note, we discuss how Bartnik's analysis may be modified to the case where the data is fixed on the boundary to provide an alternate proof of Corvino's result. The extensions considered here fix the first derivative of the metric on the boundary, which is a stronger condition than the usual Bartnik data. In the context of the Bartnik conjecture, and in light of Miao's result, one would like to consider extensions that fix the mean curvature of the boundary while, rather than fixing the first derivative of the metric. It would also be interesting if one can obtain Miao's result within this framework. However, it is not obvious how to develop the appropriate variational principle in this case; this is to be the subject of future work. 

In Section \ref{SPhaseSpace} we discuss the Hilbert manifold of extensions to be considered, which is essentially Bartnik's phase space with boundary conditions imposed. In Section \ref{SEnergy}, we introduce energy, momentum and mass definitions, and demonstrate how Corvino's result on static extensions can be obtained. Finally, in Section \ref{SConformal}, we consider a space of extensions in a prescribed conformal class. We give sufficient conditions to ensure that the infimum is realised within the fixed conformal class. However, as above, the boundary conditions considered here are not appropriate to be of direct significance to the Bartnik mass. It would be interesting to find a larger class of initial data for which a similar argument is possible, and impose Bartnik's geometric boundary conditions.

\section{The phase space}\label{SPhaseSpace}
Let $\mathcal{M}$ be a smooth asymptotically flat $3$-manifold with smooth boundary, $\Sigma$. We also assume that $\mathcal{M}$ has only a single asymptotic end; that is, there exists a compact set $K\supset\Sigma$ such that $\mathcal{M}\setminus K$ is diffeomorphic to $\R^3$ minus the closed unit ball, $\phi:\mathcal{M}\setminus K\rightarrow \R^3\setminus \overline{B_0(1)}$. On $\mathcal{M}\setminus K$ we define $\og$ to be the pullback of the Euclidean metric via $\phi$, and let $r$ be the Euclidean radial coordinate function composed with $\phi$. On $K$, $\og$ is extended to be smooth, bounded and positive definite, while $r$ is smooth and bounded between $\frac{1}{2}$ and $2$. Throughout, ``$\circ$" will indicate quantities defined with respect to the background metric $\og$. In order to include the asymptotics and prescribe the data on the boundary, we define the weighted Sobolev spaces, which are equipped with the following norms:

\begin{align}
\left\|u\right\|_{p,\delta}&=
\left\{
\begin{array}{ll}
\left(\int_\mathcal{M}\left| u\right|^p r^{-\delta p-3}d\mu_o\right)^{1/p},& p<\infty,\\
\text{ess sup}_\mathcal{M}(r^{-\delta}|u|), & p=\infty,
\end{array}
\right.
\\
\left\|u\right\|_{k,p,\delta}&=\sum_{j=0}^k\|\mathring{\nabla}^j u\|_{p,\delta-j}.
\end{align}

The spaces $L^p_\delta$ and $\W^{k,p}_{\delta}$ are defined as the completion of smooth, compactly supported functions on $\mathcal{M}\setminus\Sigma$ with respect to these norms. Spaces of sections of bundles are defined as usual and we use the standard notation $\W^{k,2}_{\delta}=\bH^k_{\delta}$. We also make use of the spaces $W^{k,p}_{\delta}$ and $H^k_\delta$, defined as the completion of smooth functions with bounded support on $\mathcal{M}$. That is, the overline indicates spaces of functions that vanish on the boundary, in the trace sense. 

Initial data for the Einstein equations is given by a Riemannian metric $g$ and a contravariant symmetric 2-tensor density $\pi$, on a 3-manifold $\mathcal{M}$. Motivated by the Bartnik mass, we are interested in the space of asymptotically flat extensions to a region $\Omega$ in a given initial data set $(\tilde{\mathcal{M}},\tilde{g},\tilde{\pi})$. In the context considered here, an extension to $\Omega$ is an asymptotically flat manifold $\mathcal M$ with boundary $\Sigma$ that may be identified with $\partial\Omega$ via a diffeomorphism, such that the initial data agrees across the boundary.

Let $\opi$ be some fixed symmetric 2-tensor density that is supported near $\Sigma$. We then consider a choice of $\og$, which we are free to specify near $\Sigma$, and $\opi$ as providing our boundary conditions; explicitly, we define the spaces
$$\mathcal{G}:=\{g\in S_2:g>0, g-\og\in \bH^2_{-1/2}\},\qquad\mathcal{K}:=\{\pi\in S^2\otimes\Lambda^3:\pi-\opi\in \bH^1_{-3/2}\},$$
where $\Lambda^3$ is the space of $3$-forms on $\mathcal{M}$, and $S_2$ and $S^2$ are symmetric covariant and contravariant tensors on $\mathcal{M}$ respectively. Initial data $(g,\pi)\in \mathcal{F}:=\mathcal{G}\times \mathcal{K}$ on $\mathcal M$ is to be thought of as an extension of some $\Omega$, where $\og$ and $\opi$ are given by extending $\tilde{g}$ and $\tilde{\pi}$ into $\mathcal M$; that is, $\Omega$ can be glued to $\mathcal{M}$ along the boundaries and data on $\Omega$ can be extended into $\mathcal{M}$. However, while this is the motivation for fixing the data on the boundary, we do not make reference to $\tilde{g}$ and $\tilde{\pi}$, as $\og$ and $\opi$ may be freely specified. Note that the space $\mathcal{F}$ imposes both the asymptotics and boundary conditions.

The constraint map, $\Phi:\mathcal{F}\rightarrow\mathcal{N}\subset L^2_{-5/2}(\Lambda^3\times T^*\mathcal{M}\otimes\Lambda^3)$, is given by
\begin{align}
\Phi_0(g,\pi)&=R(g)\sqrt{g}-(\pi^{ij}\pi_{ij}-\frac{1}{2}(\pi^k_k)^2)g^{-1/2},\\
\Phi_i(g,\pi)&=2\nabla_k\pi^k_i.
\end{align}
The constraint equations are then given by $\Phi(g,\pi)=(16\pi\rho,16\pi j_i)$, where $\rho$ and $j_i$ are the source energy and momentum densities respectively.

Bartnik's work on the phase space relies on the use of weighted Sobolev-type inequalities, most of which remain valid on an asymptotically flat manifold with boundary (see Theorem 1.2 of Ref. \cite{AF}), although some care should be taken with the use of the weighted Poincar\'e inequality. As such, it is straightforward to verify that the majority of Bartnik's proof, showing the level sets of $\Phi$ are Hilbert manifolds (cf. Conjecture \ref{conj}, below), is valid in the case where $\mathcal M$ has a boundary and the initial data has the boundary conditions imposed by $\mathcal{F}$. The place that Bartnik's proof, when applied to this case, breaks down is in proving the linearised constraint map is surjective. In fact, if $\mathcal{N}=L^2_{-5/2}(\Lambda^3\times T^*\mathcal{M}\otimes\Lambda^3)$ then this is almost certainly false. Nevertheless, we expect the following conjecture to be true and intend on pursuing a proof of this as part of future work.
\begin{conjecture}[cf. Theorem 3.12 of Ref. \cite{phasespace}]\label{conj}
For some $\mathcal{N}$, $\Phi$ is a smooth map of Hilbert manifolds and $D\Phi(g,\pi)$ is surjective at each point $(g,\pi)\in\mathcal{G}\times{K}$. It then follows from the implicit function theorem that the level sets of $\Phi$ are Hilbert submanifolds of $\mathcal{F}$.
\end{conjecture}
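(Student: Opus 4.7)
The plan is to follow Bartnik's strategy from Ref.~\cite{phasespace} and adapt each step to the presence of the boundary $\Sigma$. The argument splits into three pieces: establishing smoothness of $\Phi$ as a map between Hilbert manifolds, computing the linearisation $D\Phi(g,\pi)$ together with its formal adjoint, and proving surjectivity onto an appropriately chosen target $\mathcal{N}$.

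Smoothness of $\Phi:\mathcal{F}\to L^2_{-5/2}(\Lambda^3\times T^*\mathcal{M}\otimes\Lambda^3)$ is a direct consequence of the weighted Sobolev multiplication and embedding results that remain valid on an asymptotically flat manifold with boundary (Theorem~1.2 of Ref.~\cite{AF}). The Hilbert manifold structure on $\mathcal{F}$ is inherited from the ambient affine Hilbert spaces, and once $\mathcal{N}$ is chosen to be closed in $L^2_{-5/2}$ it likewise carries such a structure. This portion of Bartnik's argument should transfer without essential modification, since it only uses algebraic manipulations and weighted H\"older estimates that are insensitive to $\Sigma$.

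The main work lies in proving surjectivity of
\[
D\Phi(g,\pi):\bH^2_{-1/2}(S_2)\oplus\bH^1_{-3/2}(S^2\otimes\Lambda^3)\longrightarrow\mathcal{N},
\]
where the $\bH$ spaces enforce Dirichlet-type vanishing on $\Sigma$. The natural candidate for $\mathcal{N}$ is the $L^2_{-5/2}$-orthogonal complement of $\ker D\Phi(g,\pi)^*$, equivalently the closure of the range of $D\Phi(g,\pi)$. The standard route is to study the self-adjoint composition $D\Phi\circ D\Phi^*$ acting on test pairs $(N,X)$, which is precisely the linearised KID operator; Bartnik's strategy is to prove a weighted elliptic estimate of the form $\|(N,X)\|\lesssim\|(D\Phi\circ D\Phi^*)(N,X)\|$, which when combined with triviality of the asymptotic KID kernel yields surjectivity onto all of $L^2_{-5/2}$. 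In the present setting one would first verify that $D\Phi\circ D\Phi^*$, subject to the natural boundary conditions dual to the Dirichlet conditions on $(g,\pi)$, forms an elliptic boundary value problem satisfying the Lopatinski--Shapiro conditions, and then re-derive the weighted estimate with the boundary integrals controlled.

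The principal obstacle is two-fold. First, the weighted Poincar\'e inequality used throughout Bartnik's estimate must be adapted to accommodate $\Sigma$ -- this is precisely where the author flags that care is required -- and the fact that variations are required to vanish on the boundary should in fact aid rather than obstruct the inequality, via a standard trace-free extension argument. Second, the conditions dual to the Dirichlet conditions are of Neumann type on $(N,X)$, so the KID kernel becomes that of an overdetermined system on $\mathcal{M}$ with no prescribed boundary behaviour, and may in principle be non-trivial; when it is trivial the orthogonal complement is all of $L^2_{-5/2}$, and otherwise $\mathcal{N}$ must be taken to be this complement, which is closed with finite codimension equal to the dimension of the KID kernel. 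Once the elliptic estimate and identification of $\mathcal{N}$ are in hand, surjectivity follows by standard functional analytic arguments, and the implicit function theorem delivers the Hilbert submanifold structure on the level sets of $\Phi$.
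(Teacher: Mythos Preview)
There is nothing to compare against: the paper does not prove this statement. It is explicitly stated as a \emph{conjecture}, and the paragraph immediately preceding it says that the author ``expect[s] the following conjecture to be true and intend[s] on pursuing a proof of this as part of future work.'' The paper identifies exactly where Bartnik's argument breaks down --- at the surjectivity of $D\Phi(g,\pi)$ --- and notes that surjectivity onto all of $L^2_{-5/2}(\Lambda^3\times T^*\mathcal{M}\otimes\Lambda^3)$ is ``almost certainly false'' in the presence of the boundary, which is why the target $\mathcal{N}$ is left unspecified.

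Your outline is consonant with the brief discussion the paper does give: you correctly isolate the surjectivity step as the crux, you propose restricting the target to the orthogonal complement of $\ker D\Phi(g,\pi)^*$, and you flag the two genuine issues (the weighted Poincar\'e inequality near $\Sigma$, and the boundary conditions dual to Dirichlet being of Neumann type on $(N,X)$, so that the KID kernel is unconstrained at $\Sigma$). But what you have written is a research plan, not a proof. In particular, the assertion that $D\Phi\circ D\Phi^*$ with the induced Neumann-type conditions satisfies Lopatinski--Shapiro is stated rather than verified; the weighted elliptic estimate with boundary terms is asserted to be ``re-derivable'' without any indication of how the boundary integrals are actually controlled; and the finite-codimensionality of the KID kernel is taken for granted. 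These are precisely the open technical points that led the author to leave the statement as a conjecture, so your proposal does not close the gap --- it restates it.
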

That is, the space of possible extensions to a given domain $\Omega$ is a Hilbert manifold; we refer to this as the constraint submanifold, and use the notation $\mathcal{C}(\rho,j)=\Phi^{-1}(16\pi\rho,16\pi j)$. It should be noted that such a result is required for the arguments outlined in Section \ref{SEnergy}.

\section{Static metric extensions}\label{SEnergy}
The total ADM energy and momentum are respectively given by
\begin{align}
16\pi E&:=\oint_{\infty}\og^{ik}(\onabla_k g_{ij}-\onabla_j g_{ik})dS^j,\label{usualE}\\
16\pi p_i&:=2\oint_{\infty}\pi_{ij}dS^j.\label{usualp}
\end{align}
Note that we omit reference to $\og$ in writing $dS$, as the definitions are independent of the asymptotically flat metric used to define the area measure (cf. Lemma \ref{lemma44} below, particularly its application in the proof of Proposition \ref{thmcorvino}). Note that we have also made a slight abuse of notation here, as $\pi$ is in fact a density. Often the quantity $E$ is called the mass, however we reserve the term mass for the quantity, $m=\sqrt{E^2-|p|^2}$; we assume the dominant energy condition here to ensure this is real.

We are now in a position to discuss critical points of the mass/energy over the space of extensions, and in particular show how Bartnik's work is easily adapted to give another proof of Corvino's result on static metric extensions. Previously, the author considered evolution exterior to a $2$-surface \cite{PhysRevD.90.104034}; however, the data was not fixed on the boundary so the conclusion is somewhat different. In the context of the static metric extension conjecture, it is more interesting to consider fixed boundary data.

It should be emphasised here that this section is to be understood as a discussion, or commentary, on the implications of Bartnik's earlier work, rather than a new independent work. It is our hope that Conjecture \ref{conj} is established in the near future, and furthermore that the boundary conditions can be replaced with Bartnik's geometric boundary conditions. A variational argument such as this, with boundary conditions directly relevant to the Bartnik mass, would be of significant interest.
\begin{proposition}[cf. Corollary 6.2 of Ref. \cite{phasespace}]\label{prop1}
Fix $(g,\pi)\in\mathcal{C}(\rho,j)$, where $(\rho,j)\in L^1$. Assume that Conjecture \ref{conj} holds, and further assume that a weak solution, $\lambda$ to $D\Phi(g,\pi)^*[\lambda]=f$ for $f\in L^2_{-5/2}(S^2\otimes\Lambda^3)\times H^{1}_{-3/2}(S_2)$, is indeed a strong solution; that is, if $\lambda\in\mathcal{N}^*$ satisfies
$$\int_{\mathcal{M}}\lambda\cdot D\Phi(g,\pi)[h,p]=\int_\mathcal{M}f\cdot (h,p)$$
for all $(h,p)\in T_{(g,\pi)}\mathcal{F}$, then $D\Phi(g,\pi)^*[\lambda]=f$. Then if $Dm(g,\pi)[h,p]=0$ for all $(h,p)\in T_{(g,\pi)}\mathcal{C}(\rho,j)$, $(g,\pi)$ is a stationary initial data set.
\end{proposition}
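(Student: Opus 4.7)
The plan is to adapt Bartnik's variational argument from Corollary 6.2 of \cite{phasespace}, using the fact that tangent vectors $(h,p)\in T_{(g,\pi)}\mathcal{F}$ vanish on $\Sigma$ in the trace sense to discard all $\Sigma$-boundary contributions arising from integration by parts. The first step is to produce a Lagrange multiplier. Under Conjecture \ref{conj}, $D\Phi(g,\pi)$ is a continuous surjection of $T_{(g,\pi)}\mathcal{F}$ onto $\mathcal{N}$, so $\ker D\Phi(g,\pi)=T_{(g,\pi)}\mathcal{C}(\rho,j)$ is closed and complemented and the closed range theorem identifies its annihilator with $\ran D\Phi(g,\pi)^*$. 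The hypothesis $Dm(g,\pi)[h,p]=0$ on $T_{(g,\pi)}\mathcal{C}(\rho,j)$ therefore yields $\lambda\in\mathcal{N}^*$ with
\begin{equation*}
Dm(g,\pi)[h,p]=\int_{\mathcal{M}}\lambda\cdot D\Phi(g,\pi)[h,p]\qquad\text{for all }(h,p)\in T_{(g,\pi)}\mathcal{F},
\end{equation*}
that is, a weak solution of $D\Phi(g,\pi)^{*}[\lambda]=Dm$ in the sense of the Proposition.

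The second step is the Regge--Teitelboim/Hamiltonian identity: upon integration by parts,
\begin{equation*}
\int_{\mathcal{M}}(N,X)\cdot D\Phi(g,\pi)[h,p]=\int_{\mathcal{M}}D\Phi(g,\pi)^{*}[N,X]\cdot (h,p)+\oint_{\infty}\mathcal{B}_{\infty}+\oint_{\Sigma}\mathcal{B}_{\Sigma}.
\end{equation*}
The $\Sigma$-boundary integrand $\mathcal{B}_{\Sigma}$ depends only on the trace of $(h,p)$ on $\Sigma$ and thus vanishes on $T_{(g,\pi)}\mathcal{F}$, while $\mathcal{B}_{\infty}$ depends linearly on the asymptotic value of $(N,X)$ and, for $(N,X)\to\xi_{\infty}:=(E,-p^{i})/m$, reproduces the linearisation of (\ref{usualE})--(\ref{usualp}), i.e.\ $Dm(g,\pi)[h,p]$. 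Applying this identity with $(N,X)=\lambda$, after invoking the weak-to-strong hypothesis of the Proposition to give $\lambda$ enough regularity, and comparing with the Lagrange multiplier equation, I would first let $(h,p)$ range over compactly supported tangent vectors in the interior -- on which $Dm[h,p]=0$ automatically and the asymptotic boundary term drops out -- to force $D\Phi(g,\pi)^{*}[\lambda]=0$ pointwise; then let $(h,p)$ vary with nontrivial asymptotic data to force the remaining identity $\oint_{\infty}\mathcal{B}_{\infty}[\lambda,(h,p)]=\oint_{\infty}\mathcal{B}_{\infty}[\xi_{\infty},(h,p)]$, pinning down $\lambda\to\xi_{\infty}$ at infinity.

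The output is a Killing initial data set $\lambda=(N,X)$ satisfying $D\Phi(g,\pi)^{*}[\lambda]=0$ with $\lambda$ asymptotic to a nonzero timelike translation (nonvanishing by the dominant energy condition), which is precisely the statement that $(g,\pi)$ is stationary initial data. The principal obstacle is the second step: justifying the Regge--Teitelboim integration by parts in the boundary setting of Section \ref{SPhaseSpace}, correctly identifying the asymptotic boundary integrand with $Dm$, and extracting the two pieces of information -- vanishing of $D\Phi(g,\pi)^{*}[\lambda]$ and the asymptotic form of $\lambda$ -- from the one functional equation produced by the Lagrange multiplier. In Bartnik's setting without boundary this is handled through careful use of the weighted Sobolev machinery; here the $\Sigma$-contributions are killed by the built-in trace-zero condition, but the analysis still relies on the full strength of Conjecture \ref{conj} and on the weak-to-strong hypothesis being genuinely compatible with the imposed boundary conditions.
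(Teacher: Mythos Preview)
Your overall architecture—Lagrange multipliers plus the Regge--Teitelboim integration-by-parts identity, with the $\Sigma$-contributions killed by the trace-zero condition on $T_{(g,\pi)}\mathcal{F}$—is exactly the mechanism the paper has in mind (it defers the proof to Bartnik's argument and to the parallel proof of Proposition~\ref{thmcorvino}). However, the order in which you run the argument creates a genuine gap.

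You apply Theorem~\ref{banach} directly to $f=m$, obtaining $\lambda\in\mathcal{N}^*$ with $Dm[h,p]=\int_{\mathcal M}\lambda\cdot D\Phi[h,p]$, and then try to read off both $D\Phi^*[\lambda]=0$ and $\lambda\to\xi_\infty$. But $\mathcal{N}^*$ is (a quotient of) the dual of a subspace of $L^2_{-5/2}$, so any representative of $\lambda$ lives in a weighted $L^2$ space that forces decay at infinity; such a $\lambda$ cannot be asymptotic to a nonzero constant $\xi_\infty$, and your ``pinning down $\lambda\to\xi_\infty$'' step cannot succeed. Relatedly, the weak-to-strong hypothesis you invoke requires the right-hand side to be $\int_{\mathcal M}f\cdot(h,p)$ with $f\in L^2_{-5/2}(S^2\otimes\Lambda^3)\times H^1_{-3/2}(S_2)$, whereas $Dm$ is a surface functional at infinity and is not of this form, so you cannot feed $\lambda$ into that hypothesis as written.

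The paper (via the proof of Proposition~\ref{thmcorvino}) sidesteps both issues by first fixing a test lapse--shift $(N,X)$ with prescribed asymptotic value $\xi_\infty$, and applying Theorem~\ref{banach} not to $m$ but to the regularised Regge--Teitelboim Lagrangian $L_{\text{reg}}$ of (\ref{Lreg}), which on the constraint set differs from $N_\infty\cdot E$ only by an additive constant, so that critical points coincide. The point of the regularisation is that $DL_{\text{reg}}[h,p]=-\int_{\mathcal M}(h,p)\cdot D\Phi^*[N,X]$ is already an honest interior integral with density in $L^2_{-5/2}\times H^1_{-3/2}$; the multiplier $\lambda\in\mathcal{N}^*$ is then a weak, hence by hypothesis strong, solution of $D\Phi^*[\lambda]=-D\Phi^*[N,X]$. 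The Killing initial data is $\lambda+(N,X)$, whose nontrivial asymptotics come from $(N,X)$, not from $\lambda$. Your argument is easily repaired along these lines: insert $(N,X)$ before, not after, the Lagrange-multiplier step.
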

The assumption that weak solutions imply strong solutions is a result obtained in proving the analogue of Conjecture 1 in \cite{phasespace}; it is expected to be true here too. Note that Conjecture \ref{conj} and the additional condition regarding weak solutions are precisely the requirements for the arguments given in Sections 5 and 6 of \cite{phasespace} to hold. In fact, Proposition \ref{prop1} follows directly from these arguments if one replaces the function spaces used there with ours, which include boundary conditions. The proof of Proposition \ref{thmcorvino} is essentially the same argument; as such, we only present this and refer the reader to \cite{phasespace} for the proof of Proposition \ref{prop1}. Note that Proposition \ref{prop1} above differs from Corvino's static extension result, which in our framework is essentially Proposition \ref{thmcorvino} below. Let $\overline{R}(g)=R(g)\sqrt{g}$, and note that Conjecture \ref{conj} implies $D\overline{R}(g):T_g\mathcal{G}\rightarrow T_g\mathcal{N}_0$ is surjective, where $\mathcal{N}_0$ is the projection of $\mathcal{N}$ onto the first ($\Lambda^3$) factor. In the following, a static initial data metric is to be taken as a metric $g$ such that there exists a function $N$, asymptotic to a constant, satisfying $D\overline{R}(g)^*[N]=0$. A well-known result of Moncrief \cite{Moncrief1} (see also Ref. \cite{KIDs}) implies that the evolution of such an initial data metric is static. In fact, explicit calculation shows
$$ D\overline{R}(g)^*[N]=(\nabla^i\nabla^jN-\Delta_gN-NRic^{ij}+\frac{N}{2}Rg^{ij})\sqrt{g},$$
which reduces to the well-known static vacuum equations (cf. Ref. \cite{corvino2000}) when $R=0$. It is worth noting that such an $N$ is unique up to scaling, provided the initial data is not flat \cite{miaotam}. Stationary initial data in the context of Proposition \ref{prop1} is to be understood similarly.
\begin{proposition}\label{thmcorvino}
Fix $g\in\hat{\mathcal{C}}(\rho)=\{g\in \mathcal{G}: \overline{R}(g)=16\pi\rho\}$, where $\rho\in L^1$. Assume that Conjecture \ref{conj} holds, and further assume that a weak solution, $\lambda_0$ to $D\overline{R}(g)^*[\lambda_0]=f$ for $f\in L^2_{-5/2}(S^2\otimes\Lambda^3)$, is indeed a strong solution; that is, if $\lambda_0\in\mathcal{N}^*_0$ satisfies
$$\int_{\mathcal{M}}\lambda_0D\overline{R}(g)[h]=\int_\mathcal{M}f\cdot h$$
for all $h\in T_g\mathcal{G}$, then $D\overline{R}(g)^*[\lambda_0]=f$. Then if for all $h\in T_g\hat{\mathcal{C}}(\rho)$, we have $DE(g)[h]=0$, it follows that $g$ is a static initial data metric.
\end{proposition}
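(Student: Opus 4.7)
The plan is to implement the closed-range / Lagrange-multiplier argument of Sections~5 and~6 of \cite{phasespace}, specialised from the full constraint map $\Phi$ and the ADM mass to the scalar constraint $\overline{R}$ and the ADM energy $E$.

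First I would construct the Lagrange multiplier. By Conjecture~\ref{conj}, $D\overline{R}(g):T_g\mathcal{G}\to T_g\mathcal{N}_0$ is surjective, hence has closed range, and $T_g\hat{\mathcal{C}}(\rho)=\ker D\overline{R}(g)\subset T_g\mathcal{G}=\bH^2_{-1/2}$. The differential $DE(g)$ is a bounded linear functional on $T_g\mathcal{G}$ (in view of \eqref{usualE} and the weighted Sobolev trace estimates) which by hypothesis annihilates this kernel. The closed-range theorem then provides $\lambda_0\in\mathcal{N}_0^*$ with
\begin{equation*}
DE(g)[h]=\int_\mathcal{M}\lambda_0\cdot D\overline{R}(g)[h]\qquad\text{for all }h\in T_g\mathcal{G}.
\end{equation*}

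For any $h$ that is smooth and compactly supported in $\mathcal{M}\setminus\Sigma$, the flux in \eqref{usualE} gives $DE(g)[h]=0$, so $\int_\mathcal{M}\lambda_0\cdot D\overline{R}(g)[h]=0$. Thus $\lambda_0$ is a distributional solution of $D\overline{R}(g)^*[\lambda_0]=0$, which the weak-to-strong assumption promotes to a strong solution on $\mathcal{M}$. To identify the asymptotic value $\lambda_0^\infty$, I would then apply Green's identity to $\int_\mathcal{M}\lambda_0\cdot D\overline{R}(g)[h]$ for $h\in T_g\mathcal{G}$ supported away from $\Sigma$ and running out to infinity, obtaining
\begin{equation*}
DE(g)[h]=\oint_\infty B(\lambda_0,h),
\end{equation*}
where the flux $B(\lambda_0,h)$ is identified via an asymptotic expansion of $\lambda_0$ as $16\pi\lambda_0^\infty DE(g)[h]$. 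Since $DE(g)$ is not the zero functional on $T_g\mathcal{G}$, this forces $\lambda_0^\infty=(16\pi)^{-1}$. The function $N=16\pi\lambda_0$ is then asymptotic to $1$ and satisfies $D\overline{R}(g)^*[N]=0$, which is exactly the condition that $g$ be a static initial data metric.

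The principal technical obstacle is the asymptotic expansion of $\lambda_0$ at infinity and the identification of the flux $B(\lambda_0,h)$ with $16\pi\lambda_0^\infty DE(g)[h]$: this is an asymptotic-regularity result for the linearised static vacuum operator on weighted Sobolev spaces, handled in Section~6 of \cite{phasespace} in the boundaryless setting. The presence of the inner boundary $\Sigma$ is kept out of the asymptotic analysis by choosing test tensors supported away from $\Sigma$, so Bartnik's argument should transfer without essential change; $\Sigma$ enters only through implicit boundary conditions on $\lambda_0$ arising from Green's identity on the compact side, which play no role in the static conclusion.
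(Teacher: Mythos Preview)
Your overall strategy is the right one, and it parallels Sections~5--6 of \cite{phasespace}; however, there is a genuine gap at the very first step. You assert that $DE(g)$ is a bounded linear functional on $T_g\mathcal{G}=\bH^2_{-1/2}$, but this is not clear: for a generic $h\in\bH^2_{-1/2}$ the flux $\oint_{S_R}\og^{ik}(\onabla_k h_{ij}-\onabla_j h_{ik})\,dS^j$ is only $O(1)$ as $R\to\infty$ (Lemma~\ref{lemma44} with $u=\onabla h\in H^1_{-3/2}$ gives a bound $cR^{1/2}\|\onabla h\|_{1,2,-3/2:A_R}$, which need not tend to a limit). So $E$ is not known to be a $C^1$ function on all of $\mathcal{G}$, and the closed-range/Lagrange-multiplier theorem cannot be applied directly with $f=E$. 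Relatedly, the weak-to-strong hypothesis in the statement requires testing against \emph{all} $h\in T_g\mathcal{G}$ with a right-hand side of the form $\int f\cdot h$ for some $f\in L^2_{-5/2}$; your identity $\int\lambda_0 D\overline R[h]=DE[h]$ is not of this form, so restricting to compactly supported $h$ and invoking the hypothesis with $f=0$ is not justified as stated.

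The paper circumvents both issues by working instead with the regularised Regge--Teitelboim functional
\[
L_{\text{reg}}(g;N)=N_\infty E(g)-\int_\mathcal{M} N\,\overline R(g),
\]
where $N$ is an arbitrary function with $(N-N_\infty)\in H^2_{-1/2}$. After rewriting $E$ as a divergence, $L_{\text{reg}}$ is shown to extend smoothly to all of $\mathcal{G}$, and an integration by parts (using Lemma~\ref{lemma44} to kill the surface terms at infinity, and the boundary conditions on $h$ to kill those on $\Sigma$) yields
\[
DL_{\text{reg}}(g;N)[h]=-\int_\mathcal{M} h\cdot D\overline R(g)^*[N],
\]
which \emph{is} of the form $\int f\cdot h$ with $f=-D\overline R(g)^*[N]\in L^2_{-5/2}$. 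The Lagrange-multiplier theorem then produces $\lambda\in\mathcal{N}_0^*$ with $\int\lambda\,D\overline R[h]=\int f\cdot h$, the weak-to-strong hypothesis gives $D\overline R^*[\lambda]=-D\overline R^*[N]$, and the static potential is $\lambda+N\to N_\infty$. In particular the asymptotic constant is built in from the outset, so no separate asymptotic expansion of the multiplier is needed; your proposed flux identification $B(\lambda_0,h)=16\pi\lambda_0^\infty DE(g)[h]$ is thereby replaced by a purely interior computation.
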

\begin{proof}

Fix some constant $N_\infty\in\R$, then for $N$ satisfying $(N-N_\infty)\in H^2_{-1/2}(\mathcal{M})$, consider the following Lagrange function for the Hamiltonian constraint:
\begin{equation}
L(g;N)=N_\infty E(g)-\int_\mathcal{M}NR(g)\sqrt{g}\label{modlagrange}.
\end{equation}
Note that this is essentially the Regge-Teitelboim Hamiltonian with the momentum set to zero. While this Lagrange function is well-defined on $\hat{\mathcal{C}}(\rho)$, it is not the case for a generic $g\in\mathcal{G}$. However, by writing the energy as the volume integral of a divergence over $\mathcal{M}$ (cf. Proposition 4.5 of \cite{phasespace}) the terms combine and the dominant terms cancel out. In particular, the regularised Lagrange function,
\begin{align}
L_{\text{reg}}(g;N)=&\int_\mathcal{M}(N_\infty-N)\overline{R}(g)\nonumber\\
&+\int_\mathcal{M}N_\infty(\og^{ik}\og^{jl}(\onabla_k\onabla_lg_{ij}-\onabla_i\onabla_kg_{jl})\sqrt{\og}-\overline{R}(g)),\label{Lreg}
\end{align}
is well defined on all of $\mathcal{G}$, and equal to $L(g;N)$ where the latter is defined. The first integral clearly converges since $\rho\in L^1$ and the second is bound by noting that the dominant term in $R(g)$ is $\og^{ik}\og^{jl}(\onabla_k\onabla_lg_{ij}-\onabla_i\onabla_kg_{jl})$, when expressed in terms of the background connection (see Proposition 4.2 of \cite{phasespace}, and the explicit expression for $R(g)$ can be found in Appendix A of \cite{MyThesis}).
Now we show that if $(N-N_\infty)\in W^{2,2}_{-1/2}$ then we have
\begin{equation}
DL_{\text{reg}}(g;N)[h]=-\int_\mathcal{M}h\cdot D\overline{R}(g)^*[N],\label{DLreg}
\end{equation}
where $D\overline{R}(g)^*$ is the formal adjoint of $D\overline{R}(g)$. This is easily seen by direct calculation, making use of the following Lemma.
\begin{lemma}[Lemma 4.4 of Ref. \cite{phasespace}]\label{lemma44}
Let $S_R$ be the Euclidean sphere of radius $R$, $E_R$ be the exterior region to $S_R$ -- the connected component of ${\mathcal{M}\setminus S_R}$ containing infinity -- and $A_R$ be the annular region between $S_R$ and $S_{2R}$. Suppose $u\in H^{1}_{-3/2}(E_{R_0})$, then for every $R\geq R_0$,
\begin{equation}
\oint_{S_R}|u|dS\leq c R^{1/2}\|u\|_{1,2,-3/2:A_R}\label{sphereesimate},
\end{equation}
where $c$ is independent of $R$.
\end{lemma}
 Note first
$$ h\cdot D\overline{R}(g)^*[N]-ND\overline{R}(g)[h]=\nabla^i(N(\onabla_ih^k_k-\nabla^jh_{ij})+h_{ij}\onabla^jN-h^k_k\onabla_iN)\sqrt{g},$$
and then taking the integral of this divergence over $\mathcal M$ results in several boundary terms; those on $\partial M$ vanish due to the boundary conditions. The boundary terms at infinity of the form $h\onabla N$ are $o(r^{-2})$, and controlled by
$$\|h\onabla N\|_{L^1(S_R)}\leq O(R^{1/2})\sup\limits_{S_R}|h|\|N\|_{2,2,-1/2}=o(1)$$
therefore the surface integrals at infinity also vanish. Now, by rewriting
$$\nabla^i(N(\onabla_ih^k_k-\nabla^jh_{ij}))=\nabla^i((N-N_\infty)(\onabla_ih^k_k-\nabla^jh_{ij})+N_\infty(\onabla_ih^k_k-\nabla^jh_{ij}))$$
we see that the integral of the first term again vanishes, since
$$\|(N-N_\infty)\onabla h\|_{L^1(S_R)}\leq O(R^{1/2})\sup\limits_{S_R}|N-N_\infty|\|h\|_{2,2,-1/2}=o(1).$$
We are therefore left with
$$\int_{\mathcal{M}} (h\cdot D\overline{R}(g)^*[N]-ND\overline{R}(g)[h])=\int_M(N_\infty\nabla^i(\onabla_ih^k_k-\nabla^jh_{ij}))\sqrt{g}.$$
By making similar use of $\onabla-\nabla$ and $\sqrt{g}-\sqrt{\og}$, we establish (\ref{DLreg}), which is valid for all $h\in T_g\mathcal{G}$.

We now employ the following theorem of Lagrange multipliers for Banach manifolds  (see Theorem 6.3 of \cite{phasespace}).
\begin{theorem}\label{banach}
Suppose $K:B_1\rightarrow B_2$ is a $C^1$ map between Banach manifolds, such that $DK(u):T_uB_1\rightarrow T_{K(u)}B_2$ is surjective, with closed kernel and closed complementary subspace for all $u\in K^{-1}(0)$. Let $f\in C^1(B_1)$ and fix ${u\in K^{-1}(0)}$, then the following statements are equivalent:
\begin{enumerate}[(i)]
\item For all $v\in\ker DK_u$, we have
\begin{equation}Df(u)[v]=0.\end{equation}
\item There is $\lambda\in (T_{K(u)}B_2)^*$ such that for all $v\in T_u B_1$,
\begin{equation}Df(u)[v]=\left<\lambda,DK(u)[v]\right>,\end{equation}	
where $\left< \, , \right>$ refers to the natural dual pairing.
\end{enumerate}
\end{theorem}
Let $K(g)=\overline{R}(g)-16\pi\rho$, so that $\hat{\mathcal{C}}(\rho)=K^{-1}(0)$ and $T_g\hat{\mathcal{C}}(\rho)=\ker(DK(g))$, and let $f(g)=L_{\text{reg}}(g;N)$.

Then if $g$ is a critical point of the ADM energy over the space of extensions satisfying $\overline{R}(g)=16\pi\rho$, we have $DE(g)[h]=0$ for all $h\in\ker(DK(g))$; that is, $(i)$ in the above theorem is satisfied. It follows that there exists $\lambda\in (T_{K(g)}\mathcal{N}_0)^*$, such that
$$ -\int_\mathcal{M}h\cdot D\overline{R}(g)^*[N] =Df(g)[h]=\int_\mathcal{M} \lambda D\overline{R}(g)[h]$$
for all $h\in T_g\mathcal{G}$. That is, $\lambda$ is a weak solution to $D\overline{R}(g)^*[\lambda]=F$, where $F=-D\overline{R}(g)^*[N]\in L^2_{-5/2}(S^2\otimes\Lambda^3)$. By assumption, this is a strong solution and we therefore have
$$ D\overline{R}(g)^*[\lambda+N]=0;$$
that is, $g$ is a static initial data set, with static potential $(\lambda+N)\rightarrow N_\infty$ at infinity.
\end{proof}
It is clear that Theorem \ref{banach} should imply a converse statement, however little can be said about this without explicitly knowing $\mathcal{N}$.
\begin{remark}
If $g\in C^3$ then an argument of Fischer-Marsden \cite{fischer1975} (cf. Ref. \cite{corvino2000}, Proposition 2.3) says staticity implies $R(g)=0$. In particular, the above result then implies for $\rho\neq0$, any critical points of the mass (subject to the hypotheses holding) should be rougher than $C^3$. If the condition $g\in C^3$ can be weakened to $g\in H^2_{loc}$, then one concludes that there are no critical points of the mass for $\rho\neq0$.
\end{remark}

\section{Energy minimisers in a fixed conformal class}\label{SConformal}
A standard approach to simplify the constraint equations is to look for solutions within a fixed conformal class (see Ref. \cite{BaIsConstraints2004} and references therein); in this case, the Hamiltonian constraint becomes elliptic. Here we make use of this simplification by considering the space of extensions to $\Omega$ within a given conformal class. Specifically, we consider a fixed metric $\tg\in\mathcal{G}$ and consider extensions of the form ${g(\phi)=e^{4\phi}\tilde{g}}$, with $\phi\in\bH^2_{-1/2}$. For simplicity, we assume that $\mathcal{M}$ is diffeomorphic to $\R^3\setminus\overline{B_0(1)}$; that is, we consider the most natural extensions to $\Omega$. This affords us the use of the weighted Poincar\'e inequality (see, for example, Lemma 3.10 of Ref. \cite{phasespace}).

It should be emphasised that the boundary conditions imposed by the condition $\phi\in\bH^2_{-1/2}$ are too strong to be of direct significance to the Bartnik mass. While motivated by the Bartnik mass, the results in this section are simply of mathematical interest; it is the hope that similar ideas can be used to prove the existence of a minimiser in a much larger class of extensions, and therefore gain insight into the computability of the Bartnik mass. One natural candidate for a larger class of extensions would be to consider Brill initial data, using a variation of Dain's mass functional (see, for example, \cite{dain1}).

The scalar curvature of $g=e^{4\phi}\tilde g$ is given by the well-known formula,
\begin{equation*}
R(g)=e^{-4\phi}(\tilde{R}-8|\tilde{\nabla}\phi|^2-8\tilde{\Delta}\phi),
\end{equation*}
where $\sim$ indicates quantities defined with respect to $\tilde{g}$. This allows us to write the conformal constraint map, $\hat{\Phi}:\bH^2_{-1/2}(\mathcal{M})\times\mathcal{K}\rightarrow L^2_{-5/2}(\Lambda^3\times T^*\mathcal{M}\otimes\Lambda^3)$, as
\begin{align}
\hat{\Phi}_0(\phi,\pi)&=e^{2\phi}\left[(\tilde{R}-8|\tilde{\nabla}\phi|^2-8\tilde{\Delta}\phi)\sqrt{\tilde{g}}-\tilde{g}_{ik}\tilde{g}_{jl}(\pi^{ij}\pi^{kl}-\frac{1}{2}\pi^{ik}\pi^{jl})\tilde{g}^{-1/2}\right],\label{phihamconst}\\
\hat{\Phi}_i(\phi,\pi)&=2e^{4\phi}\left(\tilde{g}_{ip}\tilde{\nabla}_k\pi^{kp}+4\tg_{ip}\pi^{kp}\tnabla_k\phi-2\tilde{g}_{jp}\pi^{jp}\tnabla_i\phi\right).
\end{align}
From this point on, we will raise and lower indices, and consider the weighted Sobolev norms, with respect to $\tilde{g}$ rather than $g$ or $\og$. Note that the domain of $\hat{\Phi}$ enforces the boundary conditions on $(g,\pi)$; in particular, the conformal metric $\tg$ must itself be an extension of $\Omega$, although it need not necessarily satisfy the constraints.
\begin{proposition}
For any $\phi\in\bH^2_{-1/2}$, we have $g=e^{4\phi}\tg\in\mathcal{G}$.
\end{proposition}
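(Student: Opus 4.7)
To show $g \in \mathcal{G}$, I must verify the two defining conditions: pointwise positivity and $g-\og \in \bH^2_{-1/2}$. Positivity is immediate from $e^{4\phi}>0$ and $\tg>0$, so the content of the statement is the weighted Sobolev regularity. I would split
\[
g-\og = (e^{4\phi}-1)\tg + (\tg-\og),
\]
where the second summand lies in $\bH^2_{-1/2}$ by the hypothesis $\tg \in \mathcal{G}$. Hence the problem reduces to showing $e^{4\phi}-1 \in \bH^2_{-1/2}$, after which the product with $\tg$ lies in the same space since $\tg$ differs from the bounded background $\og$ by a $\bH^2_{-1/2}$ term.

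The main tool is the weighted Sobolev embedding (Theorem 1.2 of Ref.~\cite{AF}), which in dimension three gives $\bH^2_{-1/2} \hookrightarrow C^0_{-1/2}$. Consequently $\phi$ is uniformly bounded, and $e^{4\phi}$ as well. Writing $e^{4\phi}-1 = 4\phi\int_0^1 e^{4s\phi}\,ds$ yields the pointwise bound $|e^{4\phi}-1| \lesssim |\phi|$, so $e^{4\phi}-1 \in L^2_{-1/2}$. Differentiating,
\[
\nabla e^{4\phi} = 4 e^{4\phi}\nabla\phi, \qquad \nabla^2 e^{4\phi} = 4 e^{4\phi}\nabla^2\phi + 16 e^{4\phi}\nabla\phi\otimes\nabla\phi,
\]
so the first-derivative expression is controlled by $|\nabla\phi| \in L^2_{-3/2}$, and the linear piece of the second derivative sits in $L^2_{-5/2}$ for the same reason.

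The step I expect to be the main technical obstacle is the quadratic term $|\nabla\phi|^2$: na\"ive H\"older on $L^2_{-3/2} \times L^2_{-3/2}$ only delivers $L^1$, which is too weak. To close the estimate, I would invoke the weighted Sobolev embedding $H^1_{-3/2} \hookrightarrow L^6_{-3/2}$ applied to $\nabla\phi$, and interpolate with the bound $\nabla\phi \in L^2_{-3/2}$ to conclude $\nabla\phi \in L^4_{-3/2} \subset L^4_{-5/4}$; then $|\nabla\phi|^2 \in L^2_{-5/2}$, exactly what is needed. Finally, since $\phi \in \bH^2_{-1/2}$ has vanishing trace on $\Sigma$ and $t \mapsto e^{4t}-1$ is Lipschitz on bounded intervals, $e^{4\phi}-1$ inherits a vanishing trace on $\Sigma$, so $(e^{4\phi}-1)\tg \in \bH^2_{-1/2}$ and the verification is complete.
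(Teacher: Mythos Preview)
Your proposal is correct and follows essentially the same strategy as the paper: establish the pointwise bound $|e^{4\phi}-1|\lesssim|\phi|$ via continuity of $\phi$ (Morrey/Sobolev embedding) and control the derivatives of $e^{4\phi}$ directly in the weighted norms. In fact you are more explicit than the paper's terse argument on two points it glosses over---the quadratic term $|\tnabla\phi|^2$ (which does not appear in the paper's displayed estimate but is handled exactly as you suggest, via the embedding $H^1\hookrightarrow L^4$; cf.\ Proposition~\ref{propdelta}) and the vanishing boundary trace required for membership in $\bH^2_{-1/2}$ rather than merely $H^2_{-1/2}$.
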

\begin{proof}
It is clear that $e^{4\phi}\tg$ is positive-definite, and using the standard weighted Sobolev-type inequalities we have,
\begin{align*}
\|e^{4\phi}\tg-\tg\|_{2,2,-1/2}\leq & \,c\|\tg\|_{\infty,0}(\|e^{4\phi}-1\|_{2,-1/2}+\|e^{4\phi}\tnabla\phi\|_{2,-3/2}\\
&+\|e^{4\phi}\tnabla^2\phi\|_{2,-5/2})\\
\leq& c\,\|\tg\|_{\infty,0}(\|e^{4\phi}-1\|_{2,-1/2}+\|e^{4\phi}\|_{\infty,0}\|\tnabla\phi\|_{1,2,-3/2}).
\end{align*}
Note that $\phi$ is continuous by the Morrey embedding and $|e^{4\phi}-1|<5|\phi|$ near infinity, so $\|e^{4\phi}-1\|_{2,-1/2}<\infty$.
\end{proof}

\begin{proposition}\label{propEformula}
Assume $(\phi,\pi)\in \bH^2_{-1/2}(\mathcal{M})\times\mathcal{K}$ satisfies $\hat{\Phi}_0(\phi,\pi)=16\pi \rho$, where ${\rho\in L^1_{-3}(\Lambda^3(\mathcal{M}))}$ is the source energy density. The ADM energy can then be expressed as,
\begin{equation}
16\pi E=16\pi\tilde{E}+\int_\mathcal{M}\big{(}(8|\tnabla\phi|^2-\tilde{R})\sqrt{\tilde{g}}+(\pi^{ij}\pi_{ij}-\frac{1}{2}(\pi^k_k)^2)/\sqrt{\tilde{g}}+16\pi e^{-2\phi}\rho\big{)},\label{newEdefn}
\end{equation}
where $\tilde{E}$ is the ADM energy of $\tilde{g}$.
\end{proposition}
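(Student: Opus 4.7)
The plan is to express $16\pi(E-\tilde E)$ first as a surface integral at infinity involving only $\phi$, then via the divergence theorem to convert it to a volume integral of $-8\tilde\Delta\phi\sqrt{\tg}$, and finally to eliminate $\tilde\Delta\phi$ using the Hamiltonian constraint $\hat\Phi_0(\phi,\pi)=16\pi\rho$.

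For the asymptotic step, I would decompose
$$g-\og = (e^{4\phi}-1)\tg + (\tg-\og),$$
and exploit the linearity in $g-\og$ of the integrand $\og^{ik}(\onabla_k g_{ij}-\onabla_j g_{ik})$ defining $E$. The $(\tg-\og)$ piece reproduces $16\pi\tilde E$ by definition. Expanding the remaining piece gives
$$4e^{4\phi}\og^{ik}(\tg_{ij}\onabla_k\phi-\tg_{ik}\onabla_j\phi)+(e^{4\phi}-1)\og^{ik}(\onabla_k\tg_{ij}-\onabla_j\tg_{ik}).$$
On $S_R$ as $R\to\infty$ one has $e^{4\phi}\to 1$ and $\tg\to\og$ pointwise, by the weighted Morrey embedding applied to $\phi,\tg-\og\in\bH^2_{-1/2}$. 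The leading part of the first term collapses, using $\og^{ik}\og_{ij}=\delta^k_j$ and $\og^{ik}\og_{ik}=3$, to $4(\onabla_j\phi-3\onabla_j\phi)=-8\onabla_j\phi$. Replacing $\onabla$ and $dS$ by $\tnabla$ and $d\tilde S$ at infinity (valid because $\tg-\og$ decays) yields
$$16\pi E = 16\pi\tilde E - 8\oint_\infty\tnabla^j\phi\,d\tilde S_j.$$

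Next, I would apply the $\tg$-divergence theorem:
$$\oint_\infty\tnabla^j\phi\,d\tilde S_j=\int_\mathcal{M}\tilde\Delta\phi\sqrt{\tg}+\oint_\Sigma\tnabla^j\phi\,d\tilde S_j.$$
The $\Sigma$-boundary integral vanishes because $\phi\in\bH^2_{-1/2}$ is the $H^2$-type completion of $C_c^\infty(\mathcal M\setminus\Sigma)$, so both $\phi$ and the normal trace of $\tnabla\phi$ vanish on $\Sigma$. Multiplying the constraint $\hat\Phi_0(\phi,\pi)=16\pi\rho$ through by $e^{-2\phi}$ and solving for $-8\tilde\Delta\phi\sqrt{\tg}$ then gives
$$-8\tilde\Delta\phi\sqrt{\tg}=(8|\tnabla\phi|^2-\tilde R)\sqrt{\tg}+(\pi^{ij}\pi_{ij}-\tfrac{1}{2}(\pi^k_k)^2)/\sqrt{\tg}+16\pi e^{-2\phi}\rho,$$
and integrating and substituting into the identity above produces exactly (\ref{newEdefn}).

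The main obstacle, as in the regularisation arguments of Section \ref{SEnergy} and Proposition 4.5 of \cite{phasespace}, is the asymptotic limiting argument in the first step: one must verify that each of the cross-terms of the form $(e^{4\phi}-1)\onabla\tg$, $(\tg-\og)\onabla\phi$ and $(e^{4\phi}-1)\onabla\phi$ yields a vanishing surface integral on $S_R$ as $R\to\infty$, and that passing from $\onabla,\og$ to $\tnabla,\tg$ at infinity is legitimate. The relevant tools are the sphere estimate of Lemma \ref{lemma44}, the weighted Morrey embedding for $\bH^2_{-1/2}$, and the convergence of the ADM integral for $\tg$ itself; assembling these into a clean limit argument is the real technical content of the proof, while the remaining algebra is essentially dictated by the conformal formula for $R(g)$.
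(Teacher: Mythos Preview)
Your proposal is correct and follows essentially the same route as the paper: expand $g=e^{4\phi}\tg$ in the ADM integrand, use Lemma~\ref{lemma44} together with the Morrey embedding to kill the $(e^{4\phi}-1)$ and $(\tg-\og)$ cross-terms and reduce to $16\pi E=16\pi\tilde E-8\oint_\infty\tnabla_j\phi\,dS^j$, then apply the divergence theorem (with the $\Sigma$ boundary term vanishing since $\phi\in\bH^2_{-1/2}$) and the constraint~(\ref{phihamconst}). Your initial decomposition $g-\og=(e^{4\phi}-1)\tg+(\tg-\og)$ is a slightly tidier way to isolate $\tilde E$ than the paper's direct expansion, but the content is identical.
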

\begin{proof}
First we write $E$ in terms of $\phi$ and $\tg$,
\begin{align}
16\pi E=&\,\oint_\infty\og^{ik}e^{4\phi}\left(4\onabla_k(\phi)\tilde{g}_{ij}+\onabla_k\tilde{g}_{ij}-4\onabla_j(\phi)\tilde{g}_{ik}-\onabla_j\tilde{g}_{ik}\right)dS^j\nonumber\\
=&\,\oint_\infty\og^{ik}\left(4\onabla_k(\phi)\tilde{g}_{ij}+\onabla_k\tilde{g}_{ij}-4\onabla_j(\phi)\tilde{g}_{ik}-\onabla_j\tilde{g}_{ik}\right)dS^j\nonumber\\
&+\oint_\infty\og^{ik}(e^{4\phi}-1)\left(4\onabla_k(\phi)\tilde{g}_{ij}+\onabla_k\tilde{g}_{ij}-4\onabla_j(\phi)\tilde{g}_{ik}-\onabla_j\tilde{g}_{ik}\right)dS^j\label{eqEphidefn}.
\end{align}
Lemma \ref{lemma44} can now be used to control the second integrand in Eq. (\ref{eqEphidefn}),
\begin{align*}
\Big{|}\oint_{S_R}\og^{ik}&(e^{4\phi}-1)\left(4\onabla_k(\phi)\tilde{g}_{ij}+\onabla_k\tilde{g}_{ij}-4\onabla_j(\phi)\tilde{g}_{ik}-\onabla_j\tilde{g}_{ik}\right)dS^j\Big{|} \\
&\leq c\|e^{4\phi}-1\|_{\infty:S_R}(\|\tg\|_{\infty:S_R}\|\onabla\phi\|_{1:S_R}+\|\onabla \tg\|_{1:S_R})\\
&\leq O(R^{1/2})\|e^{4\phi}-1\|_{\infty:S_R}(\|\tg\|_{\infty:S_R}\|\onabla\phi\|_{1,2,-3/2}+\|\onabla \tg\|_{1,2,-3/2}).
\end{align*}
Now making use of the continuity and asymptotics of $e^4\phi$ and $\tg$, the right-hand-side simply becomes $o(1)$ and therefore vanishes as $R$ tends to infinity. Eq. (\ref{eqEphidefn}) now becomes
$$16\pi E=\oint_\infty\og^{ik}\left(4\onabla_k(\phi)\tilde{g}_{ij}+\onabla_k\tilde{g}_{ij}-4\onabla_j(\phi)\tilde{g}_{ik}-\onabla_j\tilde{g}_{ik}\right)dS^j,$$
which can be expressed in terms of the energy, $\tilde{E}$, of $\tg$,
$$16\pi E=16\pi\tilde{E}+4\oint_\infty\og^{ik}\left(\onabla_k(\phi)\tilde{g}_{ij}-\onabla_j(\phi)\tilde{g}_{ik}\right)dS^j.$$
Since $(\og-\tg)\in\bH^2_{-1/2}$ and $\onabla\phi=\partial\phi=\tnabla\phi$, Lemma \ref{lemma44} can again be used to conclude
\begin{align*}
16\pi E&=16\pi\tilde{E}+4\oint_\infty\tg^{ik}\left(\tnabla_k(\phi)\tilde{g}_{ij}-\tnabla_j(\phi)\tilde{g}_{ik}\right)dS^j\\
&=16\pi\tilde{E}-8\oint_\infty\tnabla_j\phi dS^j.
\end{align*}

It is now simply a matter of applying the divergence theorem and making use of the Hamiltonian constraint (\ref{phihamconst}) to complete the proof.
 \end{proof}
Henceforth, when we write $E(\phi,\pi)$, we mean to take (\ref{newEdefn}) to be the definition of the energy, which is well-defined provided both $\tilde{R}$ and the source are integrable.

In the vacuum case ($\rho=0$), if $\tilde{R}=0$ then it is clear from (\ref{newEdefn}) that the energy of any solution $g$ in the conformal class of $\tilde{g}$ has energy greater than $\tilde{E}$, with equality only if $g=\tilde{g}$. That is, if there exists a metric $\hat{g}$ in the conformal class of $\tilde{g}$ with $R(\hat{g})=0$ and appropriate boundary conditions satisfied, then the infimum of the energy is attained by $\hat{g}$. Generically such a scalar-flat extension does not exist though, as our boundary conditions on $\tnabla\phi$ are too strong to ensure this. An argument of Cantor and Brill \cite{CantorBrill1981} proves the existence of scalar-flat metrics when no boundary is present, and a similar argument can be used to guarantee the existence of such an extension under Dirichlet boundary conditions\footnote{To the best of the author's knowledge, this argument hasn't been explicitly published; however, it is almost certainly true and we intend on explicitly verifying this as part of another project.}; however, this argument does not hold for the (stronger) boundary conditions here. 

Note that the proof of Proposition \ref{propEformula} requires the vanishing of $\tnabla g$ on $\Sigma$; if we relax the condition on $\tnabla \phi$, there is an extra surface integral on $\Sigma$ corresponding to the difference in the mean curvatures due to $g$ and $\tilde{g}$, reminiscent of the Brown-York quasilocal mass \cite{brownyork}. Interestingly, imposing Bartnik's geometric boundary conditions enforce that this term indeed vanishes.

Now we will need the following estimate for the proof of the main result of this section.
\begin{proposition}\label{propdelta}
For $u\in W^{1,2}_{\delta}$ and $\epsilon>0$, it holds that
\begin{equation}
\|u\|_{4,\delta}\leq c(\epsilon)\|u\|_{2,\delta}+\epsilon\|u\|_{1,2,\delta}.
\end{equation}
\end{proposition}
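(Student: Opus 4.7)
The plan is a standard two-step interpolation: first use Hölder to interpolate $L^4_\delta$ between $L^2_\delta$ and $L^6_\delta$, then use a weighted Sobolev embedding to pass from $L^6_\delta$ to $W^{1,2}_\delta$, and finally use a scaled Young inequality to split the product into the two terms with the prescribed $\epsilon$ weighting.

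Concretely, the weighted $L^p_\delta$ norm is just an unweighted $L^p$ norm of $r^{-\delta}u$ with respect to the measure $r^{-3}d\mu_{\circ}$, so the usual Hölder interpolation carries through and yields
\begin{equation*}
\|u\|_{4,\delta}\;\leq\;\|u\|_{2,\delta}^{1/4}\,\|u\|_{6,\delta}^{3/4},
\end{equation*}
with exponents determined by $\tfrac{1}{4}=\tfrac{1}{4}\cdot\tfrac{1}{2}+\tfrac{3}{4}\cdot\tfrac{1}{6}$. In dimension three the weighted Sobolev embedding (Theorem 1.2 of Ref.\ [AF], already invoked elsewhere in the paper) gives $W^{1,2}_\delta\hookrightarrow L^6_\delta$, so there is a constant $C$ with $\|u\|_{6,\delta}\leq C\|u\|_{1,2,\delta}$, and therefore
\begin{equation*}
\|u\|_{4,\delta}\;\leq\;C\,\|u\|_{2,\delta}^{1/4}\,\|u\|_{1,2,\delta}^{3/4}.
\end{equation*}

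To produce the $\epsilon$, I would apply the scaled Young inequality $ab\leq \eta^{p}a^{p}/p+\eta^{-q}b^{q}/q$ with conjugate exponents $p=4$, $q=4/3$, to $a=\|u\|_{2,\delta}^{1/4}$, $b=\|u\|_{1,2,\delta}^{3/4}$, obtaining
\begin{equation*}
\|u\|_{4,\delta}\;\leq\;\tfrac{C\eta^{4}}{4}\|u\|_{2,\delta}+\tfrac{3C}{4\eta^{4/3}}\|u\|_{1,2,\delta}.
\end{equation*}
Choosing $\eta$ so that $\tfrac{3C}{4\eta^{4/3}}=\epsilon$ yields the claim, with $c(\epsilon)$ growing like $\epsilon^{-3}$.

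There is essentially no obstacle; the only points requiring a moment's thought are that the weighted Hölder interpolation preserves the weight parameter $\delta$ (which is immediate from the $r^{-\delta}u$ substitution), and that the dimension $n=3$ together with $p=2$ puts $L^6_\delta$ exactly at the critical weighted Sobolev embedding, which is the standard version needed here. Boundary conditions play no role, since both sides of the inequality are monotone under extension by zero, so the statement holds verbatim for the $\bar{W}$ spaces used elsewhere in the paper as well.
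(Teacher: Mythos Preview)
Your argument is correct and is essentially the same as the paper's: both interpolate $\|u\|_{4,\delta}\leq\|u\|_{2,\delta}^{1/4}\|u\|_{6,\delta}^{3/4}$ via weighted H\"older, apply Young's inequality to split the product with an $\epsilon$, and invoke the weighted Sobolev embedding $\|u\|_{6,\delta}\leq C\|u\|_{1,2,\delta}$. The only cosmetic difference is that the paper writes the H\"older step as $\|u^{1/4}u^{3/4}\|_{4,\delta}\leq\|u^{1/4}\|_{8,\delta/4}\|u^{3/4}\|_{8,3\delta/4}$ and applies Young before Sobolev, whereas you apply Sobolev first; the content is identical.
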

\begin{proof}
This follows from the weighted H\"older and Sobolev inequalities, the definition of the weighted norms, and Young's inequality:
\begin{align*}
\|u\|_{4,\delta}&=\|u^{1/4}u^{3/4}\|_{4,\delta}\\
&\leq \|u^{1/4}\|_{8,\delta/4}\|u^{3/4}\|_{8,3\delta/4}\\
&= \|u\|^{1/4}_{2,\delta}\|u\|^{3/4}_{6,\delta}\\
&\leq c(\epsilon)\|u\|_{2,\delta}+\epsilon\|u\|_{6,\delta}\\
&\leq c(\epsilon)\|u\|_{2,\delta}+\epsilon\|u\|_{1,2,\delta}.
\end{align*}
 \end{proof}
The main theorem is divided into the two following, related statements:
\begin{theorem}\label{thmconf1}
Let $S^+_\alpha$ be the set of ${(\phi,\pi)\in \bH^2_{-1/2}(\mathcal{M})\times\mathcal{K}}$, satisfying the following conditions:
\begin{enumerate}[(i)]
\item $\hat{\Phi}_0(\phi,\pi)\geq0$,
\item $\phi\geq-\alpha$,
\item $\hat{\Phi}_0(\phi,\pi)\in L^1_{-3}$.
\end{enumerate}
Then either the infimum is achieved over $S^+_\alpha$, or for all minimising sequences $(\phi_n,\pi_n)\in S^+_\alpha$, that is sequences satisfying ${\lim\limits_{n\rightarrow\infty}E(\phi_n,\pi_n)=\inf\limits_{(\phi,\pi)\in S^+_\alpha}E(\phi,\pi)}$, we have that $$\max\{\|\hat{\Phi}(\phi_n,\pi_n)\|_{2,-5/2-\epsilon},\frac{\|\tnabla_k\pi^{ij}_n\|_{2,-5/2-\epsilon}}{\|\tnabla_j\pi^{ij}_n\|_{2,-5/2-\epsilon}}\}\rightarrow\infty$$
for all $\epsilon\in(0,\frac{1}{2})$.
\end{theorem}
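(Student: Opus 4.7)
The strategy is the direct method of the calculus of variations: I prove the contrapositive by assuming that there is a minimising sequence $(\phi_n,\pi_n) \subset S^+_\alpha$ along which, for some fixed $\epsilon_0 \in (0, 1/2)$, the displayed maximum remains bounded, and then constructing a minimiser as a weak subsequential limit. The non-standard features come from the weighted function spaces and the indefiniteness of the momentum quadratic $Q(\pi) := \pi^{ij}\pi_{ij} - \tfrac12(\pi^k_k)^2$ appearing in the bulk energy formula (\ref{newEdefn}).

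\emph{A priori bounds.} Using (\ref{newEdefn}) together with condition (i) (which gives $\rho_n := \hat{\Phi}_0(\phi_n,\pi_n)/(16\pi) \geq 0$, so the source term $\int e^{-2\phi_n}\rho_n \geq 0$), condition (ii) (bounding $\phi_n$ from below and $e^{-2\phi_n}$ from above), and the boundedness of $E(\phi_n,\pi_n)$, one obtains a uniform bound on $\|\phi_n\|_{\bH^1_{-1/2}}$; the Morrey embedding then yields uniform $C^0$-control. The boundedness of $\|\hat{\Phi}(\phi_n,\pi_n)\|_{2,-5/2-\epsilon_0}$, combined with this $C^0$-bound on $\phi_n$ and the momentum constraint (with a short bootstrap via Proposition~\ref{propdelta} to handle the cross-terms $\pi_n\tnabla\phi_n$), gives a uniform bound on $\|\tnabla_j\pi_n^{ij}\|_{2,-5/2-\epsilon_0}$. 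The ratio hypothesis then promotes this to a bound on the full derivative $\|\tnabla\pi_n\|_{2,-5/2-\epsilon_0}$, and the weighted Poincar\'e inequality (using that $\pi_n - \opi$ vanishes on $\Sigma$) yields $\pi_n - \opi$ bounded in $\bH^1_{-3/2-\eta}$ for some $\eta = \eta(\epsilon_0) > 0$. Finally, elliptic regularity applied to the Hamiltonian constraint, rearranged as an equation for $\tilde{\Delta}\phi_n$, upgrades $\phi_n$ to be bounded in $\bH^2_{-1/2-\eta}$.

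\emph{Compactness and passage to the limit.} Along a subsequence, $\phi_n \rightharpoonup \phi$ in $\bH^2_{-1/2}$ and $\pi_n \rightharpoonup \pi$ in $\bH^1_{-3/2}$. Because the a priori weights are strictly stronger than the nominal ones, the Rellich-type compact embedding of weighted Sobolev spaces yields strong $C^0$-convergence of $\phi_n$ and strong $L^2_{-3/2}$-convergence of $\pi_n$. I then pass to the limit in each term of (\ref{newEdefn}): $\int |\tnabla\phi|^2$ is convex and hence weakly lower semicontinuous; $\int Q(\pi)/\sqrt{\tilde{g}}$ is continuous on bounded sets of $L^2_{-3/2}$; and the source term $\int e^{-2\phi_n}\rho_n$ passes via the identity $16\pi e^{-2\phi_n}\rho_n = (\tilde{R} - 8|\tnabla\phi_n|^2 - 8\tilde{\Delta}\phi_n)\sqrt{\tilde{g}} - Q(\pi_n)/\sqrt{\tilde{g}}$ extracted from the Hamiltonian constraint, together with weak convergence of $\tilde{\Delta}\phi_n$. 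The limit $(\phi,\pi)$ lies in $S^+_\alpha$: (ii) holds by a.e.\ convergence from the strong $C^0$-convergence, (i) because weak $L^2$-limits of non-negative sequences are non-negative a.e., and (iii) by Fatou's lemma. Weak lower semicontinuity of $E$ then gives $E(\phi,\pi) \leq \liminf_n E(\phi_n,\pi_n) = \inf_{S^+_\alpha} E$, so the infimum is realised at $(\phi,\pi)$.

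\emph{Principal obstacle.} The quadratic $Q(\pi)$ is indefinite (its negative part is $-\tfrac12(\pi^k_k)^2$), so weak convergence of $\pi_n$ alone is insufficient to pass to the limit in $\int Q(\pi_n)/\sqrt{\tilde{g}}$. This is precisely the role of the strictly stronger weight $-5/2-\epsilon$ in the theorem's hypothesis: it provides just enough excess decay --- via the compact Rellich embedding $\bH^1_{-3/2-\eta} \hookrightarrow L^2_{-3/2}$ --- to upgrade weak convergence of $\pi_n$ to strong $L^2_{-3/2}$-convergence, on which $Q$ acts continuously. A secondary technical issue is the mildly circular-looking bootstrap in the a priori bounds, where $\phi_n$ and $\pi_n$ couple via the cross-terms in the momentum constraint and through the exponential $e^{-2\phi_n}\rho_n$ in the energy; these must be handled in the correct weighted spaces by H\"older and Sobolev interpolation, with Proposition~\ref{propdelta} as the key tool.
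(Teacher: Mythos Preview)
Your overall strategy matches the paper's: direct method, a priori bounds from the bulk energy formula (\ref{newEdefn}), weight improvement via the constraint equations and the two boundedness hypotheses, Rellich compactness at the stronger weight, and verification that the limit lies in $S^+_\alpha$. Your identification of the indefiniteness of $Q(\pi)$ as the key analytic obstacle, and its resolution via the extra decay $-\epsilon$ and the compact embedding $H^1_{-3/2-\eta}\hookrightarrow L^2_{-3/2}$, are exactly the mechanism the paper exploits.

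There is, however, a gap in the ordering of your a priori bounds. You propose to extract a uniform bound on $\|\phi_n\|_{\bH^1_{-1/2}}$ from (\ref{newEdefn}) using only conditions (i), (ii) and the boundedness of $E(\phi_n,\pi_n)$, \emph{before} any control on $\pi_n$ is available. This fails precisely because of the indefiniteness you flag later: rearranging (\ref{newEdefn}) for $\int|\tnabla\phi_n|^2\sqrt{\tilde g}$ leaves $-\int Q(\pi_n)/\sqrt{\tilde g}$ on the right, and this term is not bounded above without some a priori bound on $\pi_n$. Your subsequent ``short bootstrap'' through the momentum constraint to reach $\tnabla_j\pi_n^{ij}$ inherits the problem, since the cross-terms $\pi_n\tnabla\phi_n$ require an $L^p$ bound on $\pi_n$ that you do not yet have; the loop does not close from $\phi_n$-information alone. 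The paper proceeds differently at this first step: it reads off $\|\tnabla\phi_n\|^2_{2,-3/2}$ and $\|\pi_n\|^2_{2,-3/2}$ \emph{simultaneously} from the energy formula (see inequality (\ref{Elowerbound})), so that a baseline $L^2_{-3/2}$ bound on $\pi_n$ is available from the outset, via the Poincar\'e inequality. With that input in hand, the interpolation estimates (\ref{laplacephi})--(\ref{piestimate}), the scale-broken estimate (\ref{scalebrokenest}), and the weighted Poincar\'e inequality give the $H^2_{-1/2+2\delta_0}\times H^1_{-3/2+2\delta_0}$ bounds, after which your compactness and limit-passage argument goes through essentially as you describe.
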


\begin{theorem}\label{thmconf2}
Let $S^0=\{(\phi,\pi)\in \bH^2_{-1/2}(\mathcal{M})\times\mathcal{K}:\hat{\Phi}(\phi,\pi)=0\}$. Then either the infimum is achieved over $S^0$, or for all minimising sequences $(\phi_n,\pi_n)\in S^0$, we have that $\frac{\|\tnabla_k\pi^{ij}_n\|_{2,-5/2-\epsilon}}{\|\tnabla_j\pi^{ij}_n\|_{2,-5/2-\epsilon}}\rightarrow\infty$ for all $\epsilon\in(0,\frac{1}{2})$.
\end{theorem}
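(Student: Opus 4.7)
The plan is to execute the direct method of the calculus of variations. If no minimising sequence with bounded ratio exists then there is nothing to prove, so suppose $(\phi_n, \pi_n)\in S^0$ is a minimising sequence with $\|\tnabla_k\pi_n^{ij}\|_{2,-5/2-\epsilon}\leq K\|\tnabla_j\pi_n^{ij}\|_{2,-5/2-\epsilon}$ for some $K<\infty$ and some $\epsilon\in(0,1/2)$; the goal is to extract from this a minimiser.

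The first task is to establish uniform bounds on the sequence in $\bH^2_{-1/2}\times \bH^1_{-3/2}$. The momentum constraint $\hat\Phi_i(\phi_n,\pi_n)=0$ controls $|\tnabla_j\pi_n^{ij}|$ pointwise by $|\pi_n||\tnabla\phi_n|$, so weighted H\"older and Sobolev inequalities give $\|\tnabla_j\pi_n^{ij}\|_{2,-5/2-\epsilon}\leq c\|\pi_n\|_{\bH^1_{-3/2}}\|\phi_n\|_{\bH^2_{-1/2}}$, and combining with the ratio hypothesis yields $\|\tnabla\pi_n\|_{2,-5/2-\epsilon}\leq cK\|\pi_n\|_{\bH^1_{-3/2}}\|\phi_n\|_{\bH^2_{-1/2}}$. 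The weighted Poincar\'e inequality (available because $\mathcal{M}$ is diffeomorphic to $\R^3\setminus\overline{B_0(1)}$ and $\pi_n-\opi$ vanishes on $\Sigma$ in trace), combined with Proposition~\ref{propdelta} to absorb the $\epsilon$-shift in weight, then promotes this to a bound on $\|\pi_n\|_{\bH^1_{-3/2}}$ controlled by $\|\phi_n\|_{\bH^2_{-1/2}}$. On the other side, the Hamiltonian constraint $8(\tilde\Delta\phi_n+|\tnabla\phi_n|^2)=\tilde R-(|\pi_n|^2-\tfrac12(\tr\pi_n)^2)/\tg$ linearises under the substitution $u_n=e^{\phi_n}$ to the linear elliptic equation $\tilde\Delta u_n = \tfrac{1}{8}u_n\bigl(\tilde R-(|\pi_n|^2-\tfrac12(\tr\pi_n)^2)/\tg\bigr)$ with Dirichlet data $u_n|_\Sigma=1$ and $u_n\to 1$ at infinity; weighted elliptic estimates together with a maximum principle yield a uniform $\bH^2_{-1/2}$-bound on $\phi_n$ controlled by $\|\pi_n\|_{\bH^1_{-3/2}}$. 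The two estimates combine, in a sub-critical regime, to give uniform bounds on both sequences.

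With the sequences bounded, extract weakly convergent subsequences $(\phi_n,\pi_n)\rightharpoonup(\phi_*,\pi_*)$ in $\bH^2_{-1/2}\times\bH^1_{-3/2}$. Rellich-type compactness at slightly weaker weights gives strong convergence in intermediate spaces, which suffices to pass to the limit in the quadratic constraint map $\hat\Phi$; hence $(\phi_*,\pi_*)\in S^0$. The energy (\ref{newEdefn}) consists of weakly lower semi-continuous quadratic forms in $\tnabla\phi$ and $\pi$ together with terms (such as $\int \tilde R\sqrt{\tg}$) that pass to the limit under the strong convergence already established; hence $E(\phi_*,\pi_*)\leq\liminf_n E(\phi_n,\pi_n)=\inf_{S^0}E$, so $(\phi_*,\pi_*)$ realises the infimum.

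The main obstacle is closing the bootstrap in the second paragraph. The ratio hypothesis controls $\tnabla\pi_n$ only in the weakened $L^2_{-5/2-\epsilon}$ norm, one weight step away from the Hilbert-space topology; the momentum constraint couples $\pi_n$ and $\phi_n$ quadratically; and the potential in the linearised Lichnerowicz equation has no definite sign. Converting the ratio bound into a genuine $\bH^1_{-3/2}$-estimate on $\pi_n$ whose dependence on the unknown norms is compatible with the reverse estimate coming from the Hamiltonian constraint is delicate, and is precisely the place where Proposition~\ref{propdelta} and the topological assumption $\mathcal{M}\cong\R^3\setminus\overline{B_0(1)}$ (which supplies the weighted Poincar\'e inequality) play crucial roles.
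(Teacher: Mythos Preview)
Your bootstrap in the second paragraph does not close, and you say so yourself (``the main obstacle''). The estimates you write down control $\|\tnabla\pi_n\|$ by a product involving $\|\phi_n\|_{\bH^2_{-1/2}}$, and control $\|\phi_n\|_{\bH^2_{-1/2}}$ by something involving $\|\pi_n\|_{\bH^1_{-3/2}}$; there is no smallness parameter anywhere to make the loop contract, the Lichnerowicz potential has no sign (so your appeal to the maximum principle for $u_n=e^{\phi_n}$ is unjustified), and nothing in the hypotheses rules out both norms being simultaneously large. As written this is not a proof but a sketch with its central step left open.

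The paper avoids this circularity entirely by using the quantity being minimised as the source of coercivity --- the standard first move in the direct method, which you have overlooked. On $S^0$ the source term in (\ref{newEdefn}) vanishes, so the energy formula reads
\[
16\pi E(\phi,\pi)=16\pi\tilde E+\int_{\mathcal M}\Bigl(8|\tnabla\phi|^2-\tilde R\Bigr)\sqrt{\tg}+\int_{\mathcal M}\bigl(|\pi|^2-\tfrac12(\tr_{\tg}\pi)^2\bigr)/\sqrt{\tg},
\]
from which one reads off directly that $\|\tnabla\phi_n\|_{2,-3/2}^2+\|\pi_n\|_{2,-3/2}^2$ is controlled by $E(\phi_n,\pi_n)$ and fixed data. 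Since $E(\phi_n,\pi_n)$ is bounded along a minimising sequence by definition, this gives the $H^1_{-1/2}\times L^2_{-3/2}$ bound \emph{for free}, with no use of the ratio hypothesis at all. Only \emph{after} this does the paper use the constraint equations together with the ratio bound, Proposition~\ref{propdelta} and the scale-broken elliptic estimate (\ref{scalebrokenest}) to upgrade to a uniform bound in $H^2_{-1/2-\epsilon}\times H^1_{-3/2-\epsilon}$ --- note at a strictly \emph{weaker} weight than you are aiming for, which is precisely what is needed so that Rellich compactness returns strong $H^1_{-1/2}\times L^2_{-3/2}$ convergence. The role of the ratio hypothesis is thus to improve regularity, not to supply the initial a~priori bound. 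Your linearisation $u_n=e^{\phi_n}$ is unnecessary; the paper works directly with $\phi$.
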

\begin{remark}
The conditions $(i)$ and $(iii)$ simply state the source energy density is non-negative and the total energy is finite, while condition $(ii)$ prevents the limiting metric from becoming degenerate. In Theorem \ref{thmconf1}, the alternative $\|\hat{\Phi}(\phi_n,\pi_n)\|_{2,-5/2-\epsilon}\rightarrow\infty$ simply excludes the possibility that source energy-momentum blows up as the ADM energy is minimised, which one imagines is certainly never the case for an initially integrable source. The alternative, $\frac{\|\tnabla_k\pi^{ij}_n\|_{2,-5/2-\epsilon}}{\|\tnabla_j\pi^{ij}_n\|_{2,-5/2-\epsilon}}\rightarrow\infty$, unfortunately doesn't have an obvious physical interpretation.
\end{remark}
Since the two theorems are similar, we prove them simultaneously, noting the relevant differences.
\begin{proof}

From Proposition \ref{propEformula} we have
\begin{align}
\|\tnabla\phi\|^2_{2,-3/2}+\|\pi\|^2_{2,-3/2}\leq &\, 32\pi(E-\tilde{E})+2\int_\mathcal{M}\tilde{R}\sqrt{\tilde{g}}\nonumber\\
&-2\pi\int_\mathcal{M}e^{-2\phi}\hat{\Phi}_0(\phi,\pi)\nonumber\\
\leq &\, 32\pi E+\tilde{C}.\label{Elowerbound}
\end{align}

This implies that if the initial data is sufficiently large then we can guarantee that the energy is large. Let $S$ be either of the sets $S^+_\alpha$ or $S^0$, and define $E_0=\inf_{(\phi,\pi)\in S} E(\phi,\pi)$. Now let $(\phi_n,\pi_n)$ be a sequence in the constraint set such that $E(\phi_n,\pi_n)\rightarrow E_0$. Note that (\ref{Elowerbound}) and the Poincar\'e inequality imply that there exists a constant $K$ such that for $\|(\phi,\pi)\|_{H^1_{-1/2}\times L^2_{-3/2}}>K$, we have $E(\phi,\pi)>E_0+1$. That is, truncating the beginning of the sequence if necessary, ${\|(\phi_n,\pi_n)\|_{H^1_{-1/2}\times L^2_{-3/2}}<K}$. In particular, extracting a subsequence if required, $(\phi_n,\pi_n)$ convergences weakly in $H^1_{-1/2}\times L^2_{-3/2}$ to a limit, $(\phi_\infty,\pi_\infty)$. It remains to be shown that $(\phi_\infty,\pi_\infty)\in S$.

In the following, it will be convenient to let $\dn=-\epsilon/2$, then we assume that $\max\{\|\hat{\Phi}(\phi_n,\pi_n)\|_{2,-5/2+2\dn},\frac{\|\tnabla_k\pi^{ij}_n\|_{2,-5/2+2\dn}}{\|\tnabla_j\pi^{ij}_n\|_{2,-5/2+2\dn}}\}<C$, and prove below that the infimum is realised in $S$.

Proposition \ref{propdelta} and the definition of $\hat{\Phi}_0$ give
\begin{align}
\|\tilde{\Delta}\phi_n\|_{2,-5/2+2\dn}\leq &\, c(\|\tilde{R}\|_{2,-5/2+2\dn}+\|\tnabla\phi_n\|^2_{4,-5/4+\dn}+\|\pi_n\|^2_{4,-5/4+\dn}\nonumber\\
&+\|e^{-2\phi_n}\Phi_0(\phi_n,\pi_n)\|_{2,-5/2+2\dn})\nonumber\\
\leq &\,c(\epsilon)(1+\|\pi_n\|_{2,-5/4+\dn}^2+\|\tnabla\phi_n\|^2_{2,-5/4+\dn})\nonumber\\
&+\epsilon(\|\pi_n\|^2_{1,2,-5/4+\dn}+\|\tnabla\phi_n\|^2_{1,2,-5/4+\dn}),\label{laplacephi}
\end{align}
which follows from the assumption $\|\hat{\Phi}(\phi_n,\pi_n)\|_{2,-5/2+2\dn}<C$ and condition $(ii)$ for the proof of Theorem \ref{thmconf1}, and from $\hat{\Phi}(\phi_n,\pi_n)=0$ for the proof of Theorem \ref{thmconf2}.

Similarly, the assumption $\frac{\|\tnabla_k\pi^{ij}_n\|_{2,-5/2+2\dn}}{\|\tnabla_j\pi^{ij}_n\|_{2,-5/2+2\dn}}<C$ and the definition of $\Phi_i$ gives\begin{align}
\|\tnabla\pi_n\|_{2,-5/2+2\dn}\leq &\,c (\|\tnabla\phi_n\|_{4,-5/4+\dn}   \|\pi_n\|_{4,-5/4+\dn}\nonumber\\
&+\|e^{-4\phi_n}\Phi_i(\phi_n,\pi_n)\|_{2,-5/2+2\dn})\nonumber\\
\leq &\,c (\|\tnabla\phi_n\|_{4,-5/4+\dn}^2  + \|\pi_n\|_{4,-5/4+\dn}^2+1)\nonumber\\
\leq &\,c(\epsilon)(\|\tnabla\phi_n\|_{2,-5/4+\dn}^2+\|\pi_n\|_{2,-5/4+\dn}^2+1)\nonumber\\
&+\epsilon(\|\tnabla\phi_n\|_{1,2,-5/4+\dn}^2+\|\pi_n\|_{1,2,-5/4+\dn}^2)\label{piestimate}.
\end{align}

We now recall the scale-broken estimate (Theorem 1.10 of Ref. \cite{AF}, Proposition 4.13 of Ref. \cite{MaxwellThesis}):
\begin{equation}
\|u\|_{2,2,\delta}\leq C\left(\|\tilde{\Delta}u\|_{2,\delta-2}+\|u\|_{2,0}\right).\label{scalebrokenest}
\end{equation}
Note that the application of the scale-broken estimate here requires $\epsilon\neq\frac12$. Combining (\ref{scalebrokenest}) with (\ref{laplacephi}), applying the weighted Poincar\'e inequality (cf. Lemma 3.10 of Ref. \cite{phasespace}) to (\ref{piestimate}), and choosing $\epsilon$ sufficiently small gives
\begin{align*}
\|\phi_n\|_{2,2,-1/2+2\dn}+\|\pi_n\|_{1,2,-3/2+2\dn}&\leq c \left(1+\|\phi_n\|^2_{1,2,-1/4+\dn}+\|\pi_n\|^2_{2,-5/4+\dn}\right)\\
&\leq c \left(1+\|\phi_n\|^2_{1,2,-1/2}+\|\pi_n\|^2_{2,-3/2}\right),
\end{align*}
since $\dn>-\frac14$. Weak convergence in ${H^2_{-1/2+2\dn}\times H^1_{-3/2+2\dn}}$ follows, and since $\dn<0$, the weighted Rellich compactness theorem (Lemma 2.1 of Ref \cite{ellipticsys}) implies strong convergence in $H^1_{-1/2}\times L^2_{-3/2}$.

At this point we consider $S=S^+_\alpha$ explicitly, and demonstrate that if $(\phi_n,\pi_n)\in S^+_\alpha$, then $(\phi_\infty,\pi_\infty)$ also satisfies conditions $(i)-(iii)$. Consider 
$$F_n=(\tilde{R}-8|\tilde{\nabla}\phi_n|^2-8\tilde{\Delta}\phi_n)\sqrt{\tilde{g}}-(\pi_n^2-\frac{1}{2}(\text{tr}_{\tg}\pi_n)^2)\tilde{g}^{-1/2}.$$
Note that the $|\tnabla\phi_n|^2$ and $\pi_n^2$ terms converge weakly in $L^2_{-5/2}$ since $${\|\pi^2\|_{2,-5/2}=\|\pi\|_{4.-5/4}^2\leq C\|\pi\|_{1,2,-3/2}^2}.$$ Furthermore, as the map $\pi\mapsto\pi^2$ is a bounded polynomial function from $L^2_{-3/2}$ to $L^1_{-3}$, it is smooth (see, for example, Chapter 26 of \cite{hillephillips}); that is, $\pi^2_n$ converges to $\pi^2_\infty$ strongly in $L^1_{-3}$ and by uniqueness of limits $\pi_n^2$ converges weakly in $L^2_{-5/2}$ to $\pi^2_\infty$. Note that $|\tnabla\phi_n|^2$ is similar. By simply integrating $\tilde{\Delta}\phi_n$ against a test function and integrating by parts, it is clear $\tilde{\Delta}\phi_n$ converges to $\tilde{\Delta}\phi_\infty$ weakly in $L^2_{-5/2}$. It follows that $F_n$ converges weakly in $L^2_{-5/2}$ to
$$F_\infty=(\tilde{R}-8|\tilde{\nabla}\phi_\infty|^2-8\tilde{\Delta}\phi_\infty)\sqrt{\tilde{g}}-(\pi_\infty^2-\frac{1}{2}(\text{tr}_{\tg}\pi_\infty)^2)\tilde{g}^{-1/2}.$$
We prove $F_\infty\geq0$ by contradiction; assume there is a bounded set $U\in\mathcal{M}$ such that $F_\infty<0$ on $U$. Let $\chi_U$ be the characteristic function of $U$, then by the weak convergence of $F_n$ we have
$$\int_U F_n=\int_\mathcal{M}F_n\chi_U\rightarrow\int_\mathcal{M}F\chi_U=\int_U F_\infty.$$
Since $F_n\geq0$ by assumption, we have a contradiction and it therefore follows that $\hat{\Phi}_0(\phi_\infty,\pi_\infty)\geq0$. An almost identical, albeit simpler, argument shows $\phi_\infty\geq-\alpha$, and from the definition of $E$ it is obvious that $\int\hat{\Phi}(\phi_\infty,\pi_\infty)<\infty$. We therefore conclude ${(\phi_\infty,\pi_\infty)\in S^+_\alpha}$.

The case $S=S^0$ is similar, albeit much simpler.

 \end{proof}

\begin{remark}
Theorem \ref{thmconf1} still holds without the assumption of condition $(i)$, however it is more interesting to impose the weak energy condition.
\end{remark}

\section{acknowledgements}
The author gratefully acknowledges many useful comments from two anonymous reviewers.

\bibliographystyle{spmpsci}
\bibliography{refs}

\begin{thebibliography}{10}
\providecommand{\url}[1]{{#1}}
\providecommand{\urlprefix}{URL }
\expandafter\ifx\csname urlstyle\endcsname\relax
  \providecommand{\doi}[1]{DOI~\discretionary{}{}{}#1}\else
  \providecommand{\doi}{DOI~\discretionary{}{}{}\begingroup
  \urlstyle{rm}\Url}\fi

\bibitem{AF}
Bartnik, R.: The mass of an asymptotically flat manifold.
\newblock Comm. Pure. Appl. Math. \textbf{39}, 661--693 (1986)

\bibitem{qlm}
Bartnik, R.: New definition of quasilocal mass.
\newblock Phys. Rev. Lett. \textbf{62}(20), 845--885 (1989)

\bibitem{phasespace}
Bartnik, R.: Phase space for the {E}instein equations.
\newblock Comm. Anal. Geom. \textbf{13}(5), 845--885 (2005)

\bibitem{BaIsConstraints2004}
Bartnik, R., Isenberg, J.: The constraint equations.
\newblock In: P.~Chru\'sciel, H.~Friedrich (eds.) The {E}instein Equations and
  the Large Scale Behavior of Gravitational Fields, pp. 1--38. Birkh\"auser
  Basel (2004)

\bibitem{KIDs}
Beig, R., Chrusciel, P.T.: Killing initial data.
\newblock Classical and Quantum Gravity \textbf{14}(1A), A83 (1997).
\newblock \urlprefix\url{http://stacks.iop.org/0264-9381/14/i=1A/a=007}

\bibitem{brownyork}
Brown, J.D., York~Jr, J.W.: Quasilocal energy and conserved charges derived
  from the gravitational action.
\newblock Phys. Rev. D \textbf{47}(4), 1407 (1993)

\bibitem{CantorBrill1981}
Cantor, M., Brill, D.: The laplacian on asymptotically flat manifolds and the
  specification of scalar curvature.
\newblock Compositio Mathematica \textbf{43}(3), 317--330 (1981).
\newblock \urlprefix\url{http://eudml.org/doc/89504}

\bibitem{ellipticsys}
Choquet-Bruhat, Y., Christodoulou, D.: Elliptic systems in spaces on
  {$H_{s,\delta}$} manifolds which are {E}uclidean at infinity.
\newblock Acta Mathematica \textbf{146}(1), 129--150 (1981)

\bibitem{corvino2000}
Corvino, J.: Scalar curvature deformation and a gluing construction for the
  {E}instein constraint equations.
\newblock Comm. Math. Phys. \textbf{214}(1), 137--189 (2000)

\bibitem{dain1}
Dain, S.: Proof of the angular momentum-mass inequality for axisymmetric black
  holes.
\newblock J. Dif. Geom. \textbf{79}, 33--67 (2008)

\bibitem{fischer1975}
Fischer, A.E., Marsden, J.E.: Deformations of the scalar curvature.
\newblock Duke Math. J. \textbf{42}(3), 519--547 (1975).
\newblock \doi{10.1215/S0012-7094-75-04249-0}.
\newblock \urlprefix\url{http://dx.doi.org/10.1215/S0012-7094-75-04249-0}

\bibitem{hillephillips}
Hille, E., Phillips, R.: Functional analysis and semi-groups.
\newblock American Mathematical Society (1957)

\bibitem{MaxwellThesis}
Maxwell, D.: Initial data for black holes and rough spacetimes.
\newblock Ph.D. thesis, University of Washington (2004)

\bibitem{PhysRevD.90.104034}
McCormick, S.: First law of black hole mechanics as a condition for
  stationarity.
\newblock Phys. Rev. D \textbf{90}, 104,034 (2014).
\newblock \doi{10.1103/PhysRevD.90.104034}.
\newblock \urlprefix\url{http://link.aps.org/doi/10.1103/PhysRevD.90.104034}

\bibitem{MyThesis}
McCormick, S.: The phase space for the {E}instein-{Y}ang-{M}ills equations,
  black hole mechanics, and a condition for stationarity.
\newblock Ph.D. thesis, Monash University (2014)

\bibitem{pengzi2003}
Miao, P.: Variational effect of boundary mean curvature on {ADM} mass in
  general relativity.
\newblock arXiv preprint math-ph/0309045  (2003)

\bibitem{miaotam}
Miao, P., Tam, L.F.: Static potentials on asymptotically flat manifolds.
\newblock Annales Henri Poincar\'e \textbf{16}(10), 2239--2264 (2015).
\newblock \doi{10.1007/s00023-014-0373-x}.
\newblock \urlprefix\url{http://dx.doi.org/10.1007/s00023-014-0373-x}

\bibitem{Moncrief1}
Moncrief, V.: Spacetime symmetries and linearization stability of the
  {E}instein equations. {I}.
\newblock J. Math. Phys. \textbf{16}(3), 493--498 (1975)

\end{thebibliography}

\end{document}